\documentclass[runningheads,envcountsame,envcountsect]{llncs}

\usepackage[T1]{fontenc}
\usepackage{graphicx}
\usepackage{amsfonts}
\usepackage{latexsym,amssymb,amsmath}
\usepackage{placeins}

\usepackage[ruled]{algorithm}
\usepackage[noend]{algpseudocode}
\algdef{SE}[DOWHILE]{Do}{doWhile}{\algorithmicdo}[1]{\algorithmicwhile\ #1}
\usepackage{comment}

\usepackage{caption}
\usepackage{subcaption}
\usepackage{wrapfig}
\usepackage{multirow}

\usepackage[dvipsnames]{xcolor}
\definecolor{addedblue}{RGB}{38,97,176} %
\definecolor{addedgreen}{RGB}{20,120,50} %

\usepackage{hyperref}

\usepackage[bibliography=common,appendix=inline]{apxproof}
\newtheoremrep{theorem}{Theorem}[section]
\newtheoremrep{lemma}[theorem]{Lemma}

\newcounter{localclaim}
\newcommand{\resetclaims}{\setcounter{localclaim}{0}}
\newcommand{\localclaim}[1]{%
  \refstepcounter{localclaim}\par\smallskip\noindent\textbf{Claim~\arabic{localclaim}.} \emph{#1}\par\smallskip}
\newcommand{\localclaimproof}{\emph{Proof of Claim~\arabic{localclaim}.}}

\title{Search and Rescue on the Plane}
\titlerunning{Search and Rescue on the Plane}

\author{Jared Coleman\inst{1} \and
Evangelos Kranakis\inst{2} \and
Danny Krizanc\inst{3} \and
Oscar Morales-Ponce\inst{4}}

\authorrunning{J. Coleman, E. Kranakis, D. Krizanc, and O. Morales-Ponce}

\institute{Loyola Marymount University, Los Angeles, USA \\ \email{jared.coleman@lmu.edu} \and
Carleton University, Ottawa, Canada \\ \email{kranakis@scs.carleton.ca} \and
Wesleyan University, Middletown, USA \\ \email{dkrizanc@wesleyan.edu} \and
California State University, Long Beach, USA \\ \email{Oscar.MoralesPonce@csulb.edu}}

\begin{document}
\maketitle

\begin{abstract}
We study a planar variant of the search and rescue problem whereby an agent starting at an arbitrary position $P_{\theta,r} = (r\cos\theta, r\sin\theta)$ in the plane must locate an object at an unknown position on the positive $x$-axis and deliver it to the origin. Our main contribution is to characterize the optimal form of any competitive algorithm, derive closed-form expressions for the competitive ratio, and identify a critical angle $\theta^* \approx 15.6^\circ$ which yields a phase transition to optimal competitive search and delivery in the following sense. For each angle $-\pi \leq \theta \leq \pi$ we compute a checkpoint (landing position on the $x$-axis) where the agent must go first prior to initiating a search on the $x$-axis in order to optimize the competitive ratio of search and delivery. We show that if $|\theta| \geq \theta^*$ then the checkpoint is at the origin, while if $|\theta| < \theta^*$ then the agent should land at the checkpoint $(r \cdot k_{|\theta|}, 0)$ on the $x$-axis, where $k_{|\theta|}$ is a real number given by an explicit formula we present. 

\keywords{Mobile agents \and competitive ratio \and delivery \and online algorithms \and geometric search}
\end{abstract}

\FloatBarrier
\section{Introduction}\label{sec:intro}

Search and rescue operations represent a fundamental challenge in mobile robotics and autonomous systems.
While the one-dimensional variant of this problem has been studied, many real-world scenarios involve agents operating in higher-dimensional spaces where geometric considerations become important. In this paper, we investigate a planar search and rescue problem.
We consider an agent starting at an arbitrary position given as $P_{\theta,r} = (r\cos\theta, r\sin\theta)$ in polar coordinates in the plane, at distance $r > 0$ from the origin $O = (0,0)$ and angle $\theta$ from the positive $x$-axis. An object requiring delivery to the origin is located at an unknown position $D = (d, 0)$ on the positive $x$-axis, where $d \in (0, \infty)$. The agent can move with maximum unit speed and can change directions instantaneously. 
\begin{figure}
    \centering
    \includegraphics[width=.75\linewidth]{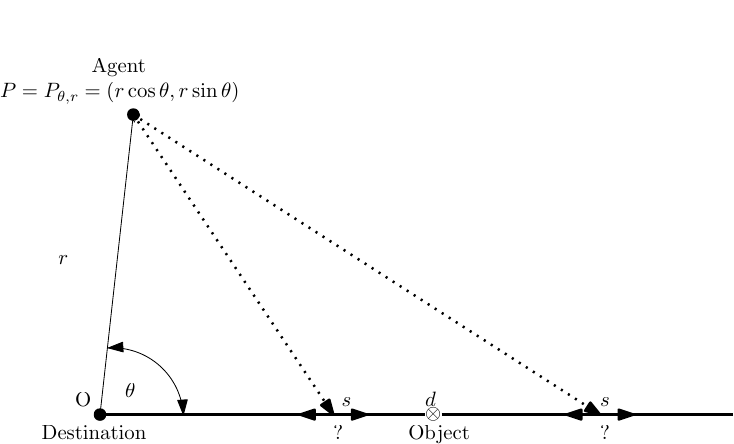}
    \caption{An agent starting at a point $P_{\theta,r}$ is searching for a lost object at an unknown location $(d,0)$. The goal is to design a strategy leading to an ``optimal'' landing position $(s,0)$ on the $x$-axis, so as to find the object and deliver it to the destination at $O$.} 
    \label{fig:example}
\end{figure}
Figure~\ref{fig:example} illustrates the search and delivery problem.
The objective is to give a search algorithm with optimal worst-case performance that finds the object and delivers it to the origin $O$.

As in the one-dimensional case, we measure an algorithm's performance using competitive analysis. An online algorithm must compete against an optimal offline algorithm that knows the object's location ahead of time. The competitive ratio is the worst-case ratio of the online delivery time to the optimal offline delivery time, taken over all possible object locations $d > 0$. A key distinction from the one-dimensional problem is that for each such point $P_{\theta,r}=(r\cos\theta, r\sin\theta)$ the angle $\theta$ introduces geometric complexity into the search strategy. As we show in Section~\ref{sec:model}, the competitive ratio depends only on $\theta$ and not on the starting distance $r$, allowing us to reduce the analysis to the case where the agent lies on the unit circle without loss of generality.

\FloatBarrier
\subsection{Model and Notation}
We consider a single agent operating in the plane. The agent starts at position $P_{\theta,r} = (r\cos\theta, r\sin\theta)$, at distance $r > 0$ from the origin $O = (0,0)$ and angle $\theta$ from the positive $x$-axis. The object requiring delivery is located at an unknown point $D = (d, 0)$ on the positive $x$-axis at distance $d \in (0,\infty)$ from the origin. The destination for delivering the object is the origin $O$. The agent moves at constant unit speed and can change directions instantaneously. The agent can carry the object once found and must physically transport it to the origin; the speed of the agent is not affected when carrying the object.

The distance from the agent's starting position to a point at distance $q \geq 0$ on the positive $x$-axis is
$$\|P_{\theta,r} - (q,0)\| = \sqrt{r^2 - 2qr\cos\theta + q^2}.$$
In the offline setting, the agent knows the value of $d$ and can plan an optimal route. The optimal offline strategy is to move directly from $P_{\theta,r}$ to $D$ and then to $O$, achieving delivery time $\sqrt{r^2 - 2dr\cos\theta + d^2} + d$.

In the online setting, the agent does not know $d$ and must design a search strategy that minimizes the worst-case competitive ratio over all possible values of $d$.
Given an online algorithm $\mathcal{A}$, let $T_{\mathcal{A}}(d, \theta, r)$ denote the delivery time when the object is at distance $d$ and the agent starts at $P_{\theta,r}$. The \emph{competitive ratio} of $\mathcal{A}$ is given as
$$\mathrm{CR}(\mathcal{A}, \theta, r) = \sup_{d > 0} \frac{T_{\mathcal{A}}(d, \theta, r)}{\sqrt{r^2 - 2dr\cos\theta + d^2} + d}.$$
An algorithm is \emph{optimal} if it minimizes $\mathrm{CR}(\mathcal{A}, \theta, r)$ over all online algorithms $\mathcal{A}$ for all values of $r$ and $\theta$.

\FloatBarrier
\subsection{Related Work}\label{sec:related}

Search problems have been a mainstay in the research community because of their numerous applications in distributed computing and other areas of computer science~\cite{alpern2003theory}. The cow-path problem and its variants~\cite{baezayates1993searching,beck1964linear,gal2010search} provide foundational results for search problems, though these focus primarily on the search phase without subsequent delivery requirements. The one-dimensional problem has been thoroughly analyzed, with optimal algorithms known for both single and multiple agents. Evacuation and group search problems~\cite{czyzowicz2019group,czyzowicz2021group} consider multiple agents but typically in symmetric environments.  %

The unconstrained version of planar search (locating an unknown point in the plane starting from a fixed origin) has been studied extensively. Gal in \cite{gal1980} proposed a spiral search algorithm which was proved to be optimal by Langetepe in~\cite{langetepe2010spiral} by showing that logarithmic spirals achieve the optimal competitive ratio of $17.289\ldots$ in this setting. Collaborative search with multiple speed robots can be found in \cite{georgiou2025spirals}.
When partial angular information about the target is available, costs can drop sharply: Bouchard et al.~\cite{bouchard2020treasure} show that repeated angular hints of aperture at most $\pi$ reduce the cost of treasure hunt from $\Theta(D^2)$ to $\Theta(D)$, where $D$ is the initial distance of the treasure from the agent. Our setting lies at the extreme of this spectrum (the target's direction from the origin is fully known, only its distance is unknown), which admits an exact competitive-ratio analysis. In another related line of work Eubeler et al.~\cite{eubeler_et_al:DagSemProc.06421.5} analyze the competitive ratio of locating the source of a target ray when the searcher must commit to a search trajectory in the plane.

There are several studies on delivery for multirobot systems. Carvalho et al.~\cite{carvalho2019efficient,carvalho2021fast} study the problem in weighted graphs. They provide an offline algorithm that runs in $O(k n \log n + km)$ time where $k$ is the number of robots, $n$ is the number of nodes, and $m$ the number of edges of the graph.
Delivery with energy constraints is considered in~\cite{chalopin2013data,chalopin2014data}. Message delivery in the plane~\cite{coleman2021message} differs from object delivery considered in our work in that in the former messages can be replicated. In addition, there is extensive literature on pickup and delivery, e.g., ~\cite{berbeglia2010dynamic} which considers dynamic problems whereby objects and people have to be delivered with real time constraints, but the approach is not in the spirit of our paper. The authors in~\cite{coleman2024optimaldeliveryfaultydrone} study the competitive ratio of object delivery in the presence of crash-faulty agents.

Search and rescue has been studied on a disk (under the name search and fetch) for agents starting at its center while the object and destination are placed at the perimeter~\cite{georgiou2016search_one}.
The present work on search and rescue extends the framework introduced for the line~\cite{coleman_2023_sar} to planar environments. The planar variant analyzed here focuses on competitive analysis and the design and solution provided (as we will see) introduces subtle geometric considerations that fundamentally alter the structure of the one-dimensional search and rescue problem.

\subsection{Main Contributions}\label{sec:contributions}

Our main result determines the optimal strategy for an agent starting at any position $P_{\theta,r} = (r\cos\theta, r\sin\theta)$, where $-\pi \leq \theta \leq \pi$. Details of the pseudo-code leading to the optimal strategy are given in Algorithm~\ref{alg:optimal_general}.

\begin{algorithm}[t]
\caption{Optimal algorithm for agent at $P_{\theta,r} = (r\cos\theta, r\sin\theta)$}\label{alg:optimal_general}
\begin{algorithmic}[1]
\Require Agent at $P_{\theta,r}$; object at unknown $(d,0)$; delivery destination $O = (0,0)$
\If{$|\theta| \leq \arccos\!\left(\frac{1}{12}\left(-8 + \sqrt[3]{1864 - 312\sqrt{33}} + 2\sqrt[3]{233 + 39\sqrt{33}}\right)\right)$}
    \State $k_{|\theta|} \gets \frac{5 + 2\cos\theta + \cos 2\theta + \sqrt{2\cos^2\!\theta\,(11 + 4\cos\theta + \cos 2\theta)}}{8}$
    \State Move directly from $P_{\theta,r}$ to the checkpoint $(r \cdot k_{|\theta|},\, 0)$
    \State Move from $(r \cdot k_{|\theta|},\, 0)$ to the origin $O$
\Else
    \State Move directly from $P_{\theta,r}$ to the origin $O$
\EndIf
\State Move along the positive $x$-axis from $O$ until the object is found at $D = (d, 0)$
\State Deliver the object from $D$ to the origin $O$
\end{algorithmic}
\end{algorithm}

\begin{theorem}\label{thm:main_general}
Algorithm~\ref{alg:optimal_general} has optimal competitive ratio.
\end{theorem}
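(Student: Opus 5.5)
Throughout we normalize $r=1$, since the competitive ratio does not depend on $r$ (Section~\ref{sec:model}). Write $P=P_{\theta,1}$, $S=(s,0)$ and $\|P-(q,0)\|=\sqrt{1-2q\cos\theta+q^2}$. The proof has two halves. First, a lower bound: every online algorithm is at least as bad as some member of the one-parameter family $\mathcal{A}_s$ = ``go straight to $S$, then to $O$, then sweep right and deliver''. Second, an explicit minimization of $\mathrm{CR}(\mathcal{A}_s,\theta)$ over $s\ge 0$, which should produce $k_{|\theta|}$ and $\theta^*$.

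\emph{Reduction to the family $\mathcal{A}_s$.} Fix any algorithm $\mathcal{A}$. Let $S=(s,0)$ be the first point of the closed positive $x$-axis it reaches; if it first touches the negative axis, take $s=0$. Before this moment the agent has learned nothing, so the time elapsed is at least $\|P-S\|$. Delivery of an object at $d$ takes at least (first visit time of $d$) $+\,d$.
\begin{itemize}
\item For $d<s$, the agent must travel from $S$ to $d$, so $T_{\mathcal{A}}(d)\ge \|P-S\|+(s-d)+d=\|P-S\|+s$. This equals the cost of $\mathcal{A}_s$ at $d$.
\item For $d\ge s$ we use an adversary argument. Either $\mathcal{A}$ reaches $d$ before covering all of $(0,s)$, or it does not.
\begin{itemize}
\item If it does, the adversary places the object at some unvisited $\varepsilon\approx 0$. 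The agent then spends at least $\|P-S\|+(d-s)+d+\varepsilon$, which is at least $\|P-S\|+s$ because $d\ge s$. The offline cost is $\approx\|P-O\|=1$.
\item If it does not, the first visit to $d$ occurs no earlier than $\|P-S\|+s+d$. Hence $T_{\mathcal{A}}(d)\ge \|P-S\|+s+2d=T_{\mathcal{A}_s}(d)$.
\end{itemize}
\end{itemize}
In both cases $\mathrm{CR}(\mathcal{A},\theta)\ge\mathrm{CR}(\mathcal{A}_s,\theta)$. Two loose ends need care: the limit $d\to 0$, where the sup in the near-origin case is approached but not attained, and the landing on the negative axis, which is dominated by landing at $O$.

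\emph{Optimizing $s$.} For $\mathcal{A}_s$ the ratio is
\[
\frac{\|P-S\|+s}{\|P-D\|+d}\ (d<s),\qquad \frac{\|P-S\|+s+2d}{\|P-D\|+d}\ (d\ge s).
\]
The first branch is decreasing in $d$, since its numerator is constant and its denominator increases. So its sup is its limit at $d\to0$, namely $A(s)=\|P-S\|+s$, which is increasing in $s$. For the second branch I would define $B(s)=\sup_{d\ge s}g_s(d)$ and find the interior critical point of $g_s$. This gives a stationary condition $\partial_d g_s=0$ that is algebraic in $d,s,\cos\theta$; the ratio tends to $2$ as $d\to\infty$.

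The optimal $s$ is then one of two candidates:
\begin{itemize}
\item the boundary $s=0$, where $\mathrm{CR}=\max(1,B(0))=B(0)$;
\item the balancing point $A(s)=B(s)$, where both branches are simultaneously tight.
\end{itemize}
Increasing $s$ from $0$ helps when the decrease of $B$ outweighs the growth of $A$. Checking the one-sided derivative $\frac{d}{ds}\max(A,B)$ at $s=0$ should give the threshold $\theta^*$, at which the balancing solution leaves $s=0$. Solving $A=B$ together with the stationarity condition should yield $k_{|\theta|}$, with $\theta^*$ given as the root of a cubic in $\cos\theta$ (hence the Cardano expression).

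\emph{Main obstacle.} I expect the hardest step to be the second branch. One must show that the sup over $d\ge s$ is attained either at $d=s$ or at a unique interior critical point. One must also eliminate $d$ from the system $\{A=B,\ \partial_d g_s=0\}$. My first attempt would be to guess that at the optimum the worst $d$ is exactly $d=s$, i.e.\ the checkpoint itself. Then $A(s)=g_s(s)$ becomes
\[
\bigl(\|P-S\|+s\bigr)\bigl(\|P-S\|+s\bigr)=\|P-S\|+3s,
\]
which I would verify against the stated $k_{|\theta|}$ by squaring out the radical. If that guess fails, I would fall back to the full two-equation elimination.
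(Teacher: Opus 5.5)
Your two-stage plan (reduce to a one-parameter family, then minimize over the checkpoint) matches the paper's. However, the reduction as you state it is false. You claim $\mathrm{CR}(\mathcal{A},\theta)\ge\mathrm{CR}(\mathcal{A}_s,\theta)$ with $s$ the \emph{first} landing point. In your sub-case where some $d\ge s$ is visited before the agent reaches $O$, the adversary only gives $\|P-S\|+2d-s$, and this does not dominate $g_s(d)$.

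Here is a concrete counterexample for small $\theta$. Let the agent land at the foot of the perpendicular $S=(\cos\theta,0)$, continue right to $(k_\theta,0)$, return to $O$, and sweep. Its ratio tends to $1+\sqrt2$ as $\theta\to0$. On the other hand, $\mathrm{CR}(\mathcal{A}_{\cos\theta},\theta)\ge g_{\cos\theta}(\cos\theta)=(\sin\theta+3\cos\theta)/(\sin\theta+\cos\theta)\to3$.

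The correct comparison is with $\mathcal{A}_M$, where $M$ is the farthest point of the positive axis visited before the first visit to $O$:
\begin{itemize}
\item the path $P\to(M,0)\to O$ gives $T_{\mathcal{A}}(d)\ge\rho(\theta,M)+M$ as $d\to0^+$;
\item every $d>M$ is first reached after $O$, so $T_{\mathcal{A}}(d)\ge\rho(\theta,M)+M+2d=T_{\mathcal{A}_M}(d)$;
\item $\mathcal{A}_M$'s ratio for $d\le M$ is at most $\rho(\theta,M)+M$.
\end{itemize}
Hence $\mathrm{CR}(\mathcal{A},\theta)\ge\mathrm{CR}(\mathcal{A}_M,\theta)$. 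The paper's Claim~3 is loose at the same spot. Its Case~3 bound is not increasing in $s_{\max}$, because the range $d>s_{\max}$ shrinks. In the example above, $\max\{\text{Case 2},\text{Case 3}\}$ drops from about $3$ to about $1+\sqrt2$ for small $\theta$. It is this domination by $A_{s_{\max}}$ that actually justifies the canonical form.

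Second, your route to $\theta^*$ would not find it. At $s=0$ you have $\max(A,B)=B$, since $B(0)>1=A(0)$. Moreover, $B$ is strictly increasing near $0$ for every $\theta\in(0,\pi)$. Indeed, $d_-(0,\theta)>0$, so near $s=0$ the sup is the unconstrained one, and for each fixed $d$ the numerator $\rho(\theta,s)+s+2d$ increases in $s$ (Lemma~\ref{lem:cr_dminus_increasing}). So the one-sided derivative at $s=0$ never changes sign, and the balancing point never ``leaves'' $s=0$. At $\theta^*$ the optimal checkpoint jumps from $k_{\theta^*}\approx1.65$ to $0$.

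What is needed is a global comparison, which the paper carries out in three steps:
\begin{itemize}
\item It splits $s$ into $[0,\ell_\theta]$, $[\ell_\theta,k_\theta]$ and $[k_\theta,\infty)$, and uses that $d_-$ decreases in $s$, that $\mathrm{CR}(s,s,\theta)$ is unimodal, and that $\rho(\theta,s)+s$ increases. This shows the minimum is at $s\in\{0,k_\theta\}$.
\item It computes $\mathrm{CR}(A_0,\theta)=(3+2\sqrt{1-\gamma})/(1+\sqrt{1-\gamma})^2$ with $\gamma=\cos\theta$.
\item It solves $\mathrm{CR}(A_0,\theta)=\rho(\theta,k_\theta)+k_\theta$. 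Eliminating the radicals and discarding spurious factors leaves $4\gamma^3+8\gamma^2-11=0$.
\end{itemize}
Your guess that the worst $d$ at the balance point is $d=s$ is right, and it is exactly Lemma~\ref{lem:crossing}.

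Finally, $g_s(d)\to1$, not $2$, as $d\to\infty$, because the denominator grows like $2d$. A limit of $2$ would force every ratio to be at least $2$, contradicting $\mathrm{CR}(A_0,\pi/2)=5/4$.
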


The proof of Theorem~\ref{thm:main_general} follows from the results established in Sections~\ref{sec:structure}--\ref{sec:analysis}. Our specific contributions are:
\begin{enumerate}
\item We show that the competitive ratio is invariant under scaling of $r$, reducing the problem to the unit circle (Lemma~\ref{lem:reduction}).
\item We prove that any optimal algorithm has a simple canonical form $A_s$ (Algorithm~\ref{alg:as}): move directly to a point at distance $s$ on the $x$-axis, return to the origin, then search away from the origin (Theorem~\ref{thm:algorithm_form}).
\item We derive closed-form expressions for the competitive ratio of $A_s$ as a function of $s$ and $\theta$ (Lemmas~\ref{lem:case_d_leq_s} and~\ref{lem:case_d_gt_s}).
\item We determine the optimal parameter $s^*(\theta) = k_{|\theta|}$ in closed form and identify the critical angle $\theta^*$ (Theorem~\ref{thm:critical_angle}): for $|\theta| < \theta^*$ the agent should go to $r \cdot k_{|\theta|}$ first, while for $|\theta| \geq \theta^*$ the agent should go directly to the origin.
\end{enumerate}

\FloatBarrier
\section{Preliminaries}\label{sec:model}

In this section we introduce notation and preliminary results which are important in understanding the rest of the paper. 

\subsection{Reduction to the Unit Circle}\label{sec:reduction}

We may assume without loss of generality that $\theta \in [0, \pi)$, since the problem is symmetric about the $x$-axis: an agent starting at angle $-\theta$ (i.e., with $y < 0$) faces an identical problem by reflection.

\begin{lemma}\label{lem:reduction}
For any $r > 0$, $\mathrm{CR}(\mathcal{A}, \theta, r) = \mathrm{CR}(\mathcal{A}', \theta, 1)$, where $\mathcal{A}'$ is the algorithm $\mathcal{A}$ expressed in coordinates rescaled by $r$. In particular, the competitive ratio is independent of $r$.
\end{lemma}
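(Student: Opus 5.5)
The plan is a scaling argument. Let $\sigma_r:\mathbb{R}^2\to\mathbb{R}^2$ be the dilation $\sigma_r(x)=x/r$. It is a bijection that fixes the origin $O$, preserves the positive $x$-axis, and maps $P_{\theta,r}$ to $P_{\theta,1}$. It also scales every Euclidean length by exactly $1/r$.

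Given an online algorithm $\mathcal{A}$ started at $P_{\theta,r}$, we define $\mathcal{A}'$ started at $P_{\theta,1}$ as follows. If $\gamma(t)$ is the trajectory $\mathcal{A}$ follows before finding the object, then $\mathcal{A}'$ follows $\gamma'(t)=\sigma_r(\gamma(rt))$. This reparametrization keeps the speed at most $1$, since $\|\dot\gamma'(t)\| = \|\dot\gamma(rt)\|\le 1$. After $\mathcal{A}'$ finds the object it mimics the delivery behaviour of $\mathcal{A}$ in the same rescaled way.

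The first key step is to check that $\mathcal{A}'$ is a legitimate online algorithm. Its decisions up to time $t$ depend only on what $\mathcal{A}$ has observed up to time $rt$. By construction, $\mathcal{A}'$ has seen the point $(d/r,0)$ by time $t$ exactly when $\mathcal{A}$ has seen $(d,0)$ by time $rt$. So no information about $d$ is used before it would be observed.

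The second key step is the time identity $T_{\mathcal{A}'}(d/r,\theta,1)=T_{\mathcal{A}}(d,\theta,r)/r$. This holds because the first time $\gamma'$ reaches $(d/r,0)$ is $1/r$ times the first time $\gamma$ reaches $(d,0)$. The subsequent path to $O$ is likewise the image of the original path, with length scaled by $1/r$.

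The third key step is the analogous identity for the offline optimum. Setting $d'=d/r$, we have
\[
\sqrt{r^2-2dr\cos\theta+d^2}+d \;=\; r\left(\sqrt{1-2d'\cos\theta+d'^2}+d'\right).
\]
Dividing the two identities, the ratio for $\mathcal{A}$ at $d$ equals the ratio for $\mathcal{A}'$ at $d'$. Since $d\mapsto d/r$ is a bijection of $(0,\infty)$, the two suprema coincide, which gives $\mathrm{CR}(\mathcal{A},\theta,r)=\mathrm{CR}(\mathcal{A}',\theta,1)$.

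For the ``in particular'' part, note that the map $\mathcal{A}\mapsto\mathcal{A}'$ is invertible, with the inverse given by the dilation $\sigma_r^{-1}$. Hence it is a bijection between the online algorithms for starting radius $r$ and those for radius $1$, and it preserves the competitive ratio. Therefore the infimum over algorithms, the optimal competitive ratio, is the same for every $r$. Moreover, an optimal algorithm for the unit circle rescales to an optimal algorithm for radius $r$; for example, a checkpoint at $k_{|\theta|}$ becomes a checkpoint at $r\,k_{|\theta|}$.

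The only delicate point is formalizing the online-ness claim in the first key step, i.e.\ that the rescaled algorithm never uses information before it is observed. This follows directly from the correspondence of observation times noted there, so I expect no real obstacle.
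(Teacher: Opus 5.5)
Your proposal is correct and uses the same scaling argument as the paper: rescaling lengths by $r$ multiplies both the online delivery time and the offline optimum $\sqrt{r^2-2dr\cos\theta+d^2}+d$ by $r$, and $d\mapsto d/r$ is a bijection of $(0,\infty)$, so the supremum of the ratio is unchanged. You make explicit several points the paper leaves implicit: the speed-preserving time reparametrization $t\mapsto rt$, the check that $\mathcal{A}'$ is still an online algorithm, and the bijection $\mathcal{A}\mapsto\mathcal{A}'$ that justifies reading ``independent of $r$'' as a statement about the optimal competitive ratio.
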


\begin{proof}
Rescale coordinates by setting the unit of length to $r$. In the rescaled system, the agent starts at unit distance from the origin, the object is at distance $d/r$ on the positive $x$-axis, and all travel distances are divided by $r$. Both the online time $T_{\mathcal{A}}$ and the offline time scale by the same factor of $r$, so their ratio is unchanged.

\end{proof}

By Lemma~\ref{lem:reduction}, we henceforth set $r = 1$ and write $P_\theta = P_{\theta,1} = (\cos\theta, \sin\theta)$. For convenience, we define $\rho(\theta, q)$ as the distance from $P_\theta$ to a point on the positive $x$-axis at (Euclidean) distance $q$ from the origin:
\begin{equation}\label{eq:rho_def}
\rho(\theta, q) = \sqrt{1 - 2q\cos\theta + q^2}.
\end{equation}
The optimal offline delivery time simplifies to $T^*(d, \theta) = \rho(\theta, d) + d$, and the competitive ratio of an algorithm $\mathcal{A}$ becomes
$$\mathrm{CR}(\mathcal{A}, \theta) = \sup_{d > 0} \frac{T_{\mathcal{A}}(d, \theta)}{T^*(d, \theta)}.$$

\FloatBarrier
\subsection{The Optimal Algorithm}\label{sec:structure}

We begin by characterizing the form that any optimal competitive algorithm must take. We define a family $(A_s: s \geq 0)$ of algorithms, where $A_s$ is Algorithm~\ref{alg:as}: the agent moves directly from $P_\theta$ to the point $S = (s, 0)$ on the $x$-axis, then to the origin $O$, and finally searches along the positive $x$-axis away from the origin until encountering the object at $D$, delivering it to $O$.

\begin{algorithm}[t]
\caption{Algorithm $A_s$ for the agent starting at $P_\theta = (\cos\theta, \sin\theta)$}\label{alg:as}
\begin{algorithmic}[1]
\Require Checkpoint distance $s \geq 0$; agent starts at $P_\theta$; object at unknown $(d,0)$
\State Move directly from $P_\theta$ to the checkpoint $S = (s, 0)$ \Comment{cost: $\rho(\theta, s)$}
\State Move from $S$ to the origin $O = (0, 0)$ \Comment{cost: $s$}
\State Move away from $O$ until the object is found at $D = (d, 0)$ \Comment{cost: $d$}
\State Deliver the object from $D$ to the origin $O$ \Comment{cost: $d$}
\end{algorithmic}
\end{algorithm}

\begin{theorem}\label{thm:algorithm_form}
There exists an $s \geq 0$ for which $A_s$ is optimal.
\end{theorem}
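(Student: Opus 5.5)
Given an arbitrary online algorithm $\mathcal{A}$, I will exhibit an $s \ge 0$ with $\mathrm{CR}(A_s,\theta) \le \mathrm{CR}(\mathcal{A},\theta)$. Since $A_s$ ranges over a one-parameter family and the competitive ratio is continuous in $s$, with $\mathrm{CR}(A_s,\theta)\to\infty$ as $s\to\infty$, an optimal member of the family then exists, which proves the theorem.

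The comparison goes through a domination argument. For $q \ge 0$, let $t_{\mathcal{A}}(q)$ be the first time the trajectory of $\mathcal{A}$ reaches $(q,0)$; it is $+\infty$ if the point is never reached, and if some point is never reached then $\mathrm{CR}(\mathcal{A},\theta)=\infty$ and there is nothing to prove. We have $T_{\mathcal{A}}(d,\theta) \ge t_{\mathcal{A}}(d) + d$, since after finding the object the agent needs at least time $d$ to deliver it. The agent begins at $P_\theta$, which is off the $x$-axis unless $\theta=0$, and eventually must visit every point of the positive $x$-axis. Let $S=(s,0)$ be the first point of the closed half-axis $\{(q,0): q\ge 0\}$ that the agent touches. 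If the agent first touches the negative axis or the origin, take $s=0$.

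The key step is to show $t_{\mathcal{A}}(d) \ge t_{A_s}(d)$ for every $d>0$, where
$$t_{A_s}(d)=\rho(\theta,\min(d,s)) \text{ for } d\le s, \qquad t_{A_s}(d)=\rho(\theta,s)+s+d \text{ for } d>s.$$
The case $d>s$ is handled with the $1$-Lipschitz (unit speed) property together with the fact that the real trajectory must also reach the origin region. Points $d\le s$ that are swept on the way back can be visited by $\mathcal{A}$ no earlier than by $A_s$, by the triangle inequality $t_{\mathcal{A}}(d)\ge \|P_\theta - (d,0)\| = \rho(\theta,d)$.

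For $d>s$ the naive bound fails: $\mathcal{A}$ could head right from $S$ and visit $d>s$ before returning to $O$. This is the main obstacle. Such algorithms are not dominated pointwise. Instead I would argue at the level of competitive ratios. If $\mathcal{A}$ sweeps right first to some point $s'>s$ before reaching the origin, then it reaches all of $[0,s']$ along the way before any point beyond, and the points $d>s'$ are reached at time at least $\rho(\theta,s)+(s'-s)+s'+d \ge \rho(\theta,s')+s'+d$ by the triangle inequality $\rho(\theta,s')\le\rho(\theta,s)+(s'-s)$. So the algorithm $A_{s'}$ is at least as good for $d>s'$. For $d\le s'$, $A_{s'}$ reaches $d$ at time $\rho(\theta,s')+(s'-d)$ on its return sweep, which could be worse than $\mathcal{A}$. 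I would handle this by defining $s'$ as the rightmost point reached before the origin is first visited. Then $\mathcal{A}$ must still travel to $O$ from $s'$ before exploring, so its times for $d\le s'$ include the cost $\ge \rho(\theta,s')$ on the points of its path.

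I expect this last comparison, for points $d\le s'$ that $\mathcal{A}$ visits on an outbound zigzag, to need the most care. If a clean pointwise bound fails there, I would fall back on showing that the worst-case ratio of $A_{s'}$ is attained for $d>s'$ or at $d\to 0$, and that $\mathcal{A}$ incurs at least that ratio at the same $d$.
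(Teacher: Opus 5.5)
Your final argument compares $\mathcal{A}$ with $A_{s'}$, where $s'$ is the farthest axis point reached before the first visit to $O$. This differs from the paper's proof, and it is the sounder route. The paper keeps the \emph{first-touch} point $s$ fixed. It bounds $\mathrm{CR}(\mathcal{A},\theta)$ below by $\rho(\theta,s)+2s_{\max}-s$ and by $\sup_{d>s_{\max}}\frac{\rho(\theta,s)+2s_{\max}-s+2d}{\rho(\theta,d)+d}$. It then asserts that both bounds increase in $s_{\max}$, so that one may set $s_{\max}=s$. The second bound is not monotone, because its domain $d>s_{\max}$ shrinks as $s_{\max}$ grows. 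Take $\theta=10^\circ$: go to $s=1/\cos\theta$, sweep out to $k_\theta\approx 1.68$, and return to $O$. This algorithm has ratio $\approx 2.53$, which is below $\mathrm{CR}(A_{1/\cos\theta},\theta)\approx 2.70$, so the paper's reduction fails for it. Your comparison matches the domains exactly and needs only $\rho(\theta,s')\le\rho(\theta,s)+(s'-s)$. Consistently, that example's ratio does exceed $\mathrm{CR}(A_{k_\theta},\theta)\approx 2.40$.

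As written, though, your proof is incomplete, and the part you left open is the one that carries the theorem.

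\emph{The opening ``key step'' should be dropped.} With $s$ the first-touch point, pointwise domination fails for $d>s$. Also, your formula $t_{A_s}(d)=\rho(\theta,\min(d,s))$ is wrong for $d\le s$: $A_s$ reaches $(d,0)$ at time $\rho(\theta,s)+s-d$. So the bound $t_{\mathcal{A}}(d)\ge\rho(\theta,d)$ says nothing about $A_s$.

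\emph{The fallback for $d\le s'$ is necessary, not optional.} Your claim that $\mathcal{A}$'s times for $d\le s'$ are at least $\rho(\theta,s')$ is false for points met on an outbound sweep. Those are reached at time about $\rho(\theta,s)+d-s$, which can be smaller. To carry out the fallback, let $t_O$ be the first visit to $O$ and $t_{S'}$ the first visit to $S'=(s',0)$. If $s'=0$, then $T_{\mathcal{A}}(d)\ge 1+2d=T_{A_0}(d)$ for all $d$ and you are done. If $s'>0$, then $t_{S'}<t_O$.
\begin{itemize}
\item There exist $d_n\to 0^+$ with $t_{\mathcal{A}}(d_n)\ge t_{S'}$. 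Otherwise every $(d,0)$ with $0<d<\delta$ is reached before $t_{S'}$, and by continuity the trajectory hits $O$ by time $t_{S'}<t_O$, a contradiction.
\item For these $d_n$, $T_{\mathcal{A}}(d_n)\ge t_{S'}+(s'-d_n)+d_n\ge \rho(\theta,s')+s'$. Since $\rho(\theta,d_n)+d_n\to 1$, you get $\mathrm{CR}(\mathcal{A},\theta)\ge \rho(\theta,s')+s'$. By Lemma~\ref{lem:case_d_leq_s}, this is exactly $A_{s'}$'s worst ratio on $(0,s']$.
\item For $d>s'$, your own bound gives $T_{\mathcal{A}}(d)\ge t_O+2d\ge \rho(\theta,s')+s'+2d$. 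Here $t_{S'}\ge\rho(\theta,s')$ holds directly by straight-line distance, so the first-touch point is not needed at all.
\end{itemize}
Together these give $\mathrm{CR}(\mathcal{A},\theta)\ge \mathrm{CR}(A_{s'},\theta)$. Combined with your attainment argument over $s$, that proves the statement.
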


\begin{proof}
\resetclaims
We establish this result through a series of claims concerning the structure of an optimal algorithm for our search and delivery problem.

\localclaim{Optimal algorithms reach the $x$-axis directly on a straight line path.}
\localclaimproof\ Both the object and the destination lie on the $x$-axis, so any path from $P_\theta$ to the $x$-axis longer than the straight line strictly increases travel time without providing information about the object's location.

\localclaim{Once on the $x$-axis, optimal algorithms remain on it.}
\localclaimproof\ Any excursion off the axis strictly increases travel time, since the object and destination both lie on the $x$-axis.

\localclaim{The agent should visit the origin before searching further from it.}
\localclaimproof\
Let $S = (s,0)$ denote the first point where the agent reaches the $x$-axis, at distance $s$ from the origin, and let $S_{\max} = (s_{\max}, 0)$ denote the farthest point from the origin visited before reaching $O$, at distance $s_{\max} \geq s$ from the origin. These two quantities summarize the agent's behavior before its first visit to $O$, no matter how it moves on the axis. The explored region is always the interval $[0, s_{\max}]$, so any reversal merely revisits already-explored points and reveals the object no sooner. Consequently the path costs used below are the cheapest that reach $O$ after exploring out to $s_{\max}$, and the resulting case bounds lower-bound the competitive ratio of \emph{every} online algorithm, not just those that move monotonically from $S$ to $S_{\max}$ to $O$. We analyze the competitive ratio in three cases based on the object's (unknown) location $(d,0)$.

\textbf{Case 1:} If $d \in [s, s_{\max}]$, the object is found before reaching the origin. The competitive ratio is:
$$
\sup_{d \in [s,s_{\max}]} \frac{\rho(\theta, s) + (d - s) + d}{\rho(\theta, d) + d} = \sup_{d \in [s,s_{\max}]} \frac{\rho(\theta, s) + 2d - s}{\rho(\theta, d) + d}.
$$

\textbf{Case 2:} If $d < s$, the object lies closer to the origin than the first point on the axis. The worst-case competitive ratio is:
$$
\sup_{d \in (0,s)} \frac{\rho(\theta, s) + (s_{\max} - s) + s_{\max}}{\rho(\theta, d) + d} = \rho(\theta, s) + 2s_{\max} - s.
$$
The above expression follows since the left-hand side is decreasing in $d$ for $d < s$, so the supremum is achieved as $d \to 0^+$.

\textbf{Case 3:} If $d > s_{\max}$, the object lies beyond the farthest explored point. The competitive ratio is:
$$
\sup_{d > s_{\max}} \frac{\rho(\theta, s) + (s_{\max} - s) + s_{\max} + 2d}{\rho(\theta, d) + d} = \sup_{d > s_{\max}} \frac{\rho(\theta, s) + 2s_{\max} - s + 2d}{\rho(\theta, d) + d}.
$$

The overall competitive ratio is the maximum of the three cases above, hence at least $\max\{\text{Case 2},\, \text{Case 3}\}$.
Observe that both Cases~2 and~3 are strictly increasing with respect to $s_{\max}$. Since $s_{\max} \geq s$, this lower bound is minimized when $s_{\max} = s$. At $s_{\max} = s$ the domain of Case~1 collapses to the single point $d = s$, giving ratio $1$, so Case~1 is not the binding constraint.

It follows from the Claims above that an optimal algorithm involves the agent moving directly from $P_\theta$ to $S$, then to the origin, and finally searching along the positive $x$-axis away from the origin.

\end{proof}

\FloatBarrier
\section{Analysis of Algorithm Family $A_s$}\label{sec:analysis}

Having characterized the optimal algorithm form, we now analyze the competitive ratio for the algorithm family $(A_s: s\geq 0)$. Throughout this section we assume $\theta \in (0, \pi)$ and $s \geq 0$, and write $\alpha = \rho(\theta, s) + s$ and $\gamma = \cos\theta$; when the dependence on $s$ and $\theta$ matters, we write quantities like $d_-$ (introduced in Lemma~\ref{lem:case_d_gt_s}) as $d_-(s, \theta)$. The boundary case $\theta = 0$ reduces to search and rescue on a ray~\cite{coleman_2023_sar} and is recovered as a limit in Theorem~\ref{thm:critical_angle}.

\subsection{Competitive Ratio of $A_s$}\label{sec:cr_as}

Algorithm $A_s$ moves from $P_\theta$ to $S$, then to $O$, and finally searches along the positive $x$-axis away from the origin. The competitive ratio depends on whether the object is encountered during the initial traversal from $S$ to $O$ or requires additional search beyond $O$. 

We denote by $\mathrm{CR}_{d \leq s}(s, \theta)$ the worst-case competitive ratio of $A_s$ restricted to object positions $d \in (0, s]$.

\begin{lemma}\label{lem:case_d_leq_s}
For $s > 0$, $\mathrm{CR}_{d \leq s}(s, \theta) = \rho(\theta, s) + s.$
\end{lemma}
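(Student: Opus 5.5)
The plan is to write the delivery time of $A_s$ explicitly for an object at $D=(d,0)$ with $0<d\le s$, notice that it does not depend on $d$, and then reduce the claim to a monotonicity statement about the offline time $T^*(d,\theta)=\rho(\theta,d)+d$.

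\emph{Online time.} Under Algorithm~\ref{alg:as} the agent first travels from $P_\theta$ to $S=(s,0)$, at cost $\rho(\theta,s)$. Since $d\le s$, the object lies on the segment from $S$ to $O$. The agent therefore meets it after a further $s-d$ units and then carries it the remaining $d$ units to $O$. This gives
\[
T_{A_s}(d,\theta)=\rho(\theta,s)+(s-d)+d=\rho(\theta,s)+s=\alpha ,
\]
which is constant in $d$ on $(0,s]$. Consequently
\[
\mathrm{CR}_{d\le s}(s,\theta)=\sup_{d\in(0,s]}\frac{\alpha}{\rho(\theta,d)+d}=\frac{\alpha}{\inf_{d\in(0,s]}\bigl(\rho(\theta,d)+d\bigr)} .
\]

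\emph{Monotonicity of the offline time.} Set $f(d)=\rho(\theta,d)+d$ and $\gamma=\cos\theta$. Differentiating,
\[
f'(d)=\frac{d-\gamma}{\rho(\theta,d)}+1 .
\]
Now $\rho(\theta,d)^2=(d-\gamma)^2+\sin^2\theta$, so $|d-\gamma|<\rho(\theta,d)$ whenever $\theta\in(0,\pi)$. Hence $f'(d)>0$, and $f$ is strictly increasing on $[0,\infty)$. The infimum of $f$ over $(0,s]$ is therefore the limit $f(0^+)=\rho(\theta,0)=1$. It is approached as $d\to0^+$ but not attained, and this is where the hypothesis $s>0$ is needed, so that the interval $(0,s]$ is nonempty.

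\emph{Conclusion and main obstacle.} Substituting the infimum gives $\mathrm{CR}_{d\le s}(s,\theta)=\alpha/1=\rho(\theta,s)+s$, as claimed. The only step with any substance is the sign of $f'$, and it comes down to the identity $\rho^2=(d-\gamma)^2+\sin^2\theta$. One can also see it geometrically: by the triangle inequality, $\rho(\theta,d)$ is $1$-Lipschitz in $d$, so $\rho(\theta,d)+d$ is nondecreasing. Finally, I would note that the value is a supremum rather than a maximum, since it is reached only in the limit $d\to0^+$.
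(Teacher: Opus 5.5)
Your proof is correct and follows the same route as the paper. On $(0,s]$ the online cost is the constant $\rho(\theta,s)+s$, and because $\rho(\theta,d)+d$ is increasing, the supremum is approached as $d\to0^+$, where the offline time tends to $\rho(\theta,0)=1$. The paper only asserts that monotonicity, so your derivative bound via $\rho(\theta,d)^2=(d-\gamma)^2+\sin^2\theta$, or equivalently the $1$-Lipschitz observation, supplies the missing justification.
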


\begin{proof}
For $d \in (0, s]$ the agent encounters the object on the way from $S$ to $O$, so the online cost is $\rho(\theta,s) + s$ and the competitive ratio is
$$
\frac{\rho(\theta, s) + s}{\rho(\theta, d) + d}.
$$
Since $\rho(\theta, d) + d$ is increasing in $d$, the supremum is achieved as $d \to 0^+$, giving $\rho(\theta,s) + s$ (using $\rho(\theta, 0) = 1$).

\end{proof}

For $d > s$, the object lies beyond $S$. The agent traverses $P_\theta \to S \to O$ (in this order) without encountering the object, then searches along the positive $x$-axis to distance $d$, and returns to $O$. The competitive ratio is
\begin{equation}\label{eq:cr_d_gt_s}
\mathrm{CR}(s, d, \theta) = \frac{\rho(\theta, s) + s + 2d}{\rho(\theta, d) + d}.
\end{equation}
We denote by $\mathrm{CR}_{d > s}(s, \theta) = \sup_{d > s} \mathrm{CR}(s, d, \theta)$ the worst-case competitive ratio of $A_s$ restricted to object positions $d > s$.

\begin{lemma}\label{lem:case_d_gt_s}
Let
$$
d_- = \frac{\alpha(1+\gamma^2) + 2\gamma - \alpha\sqrt{(1-\gamma^2)(1+\alpha\gamma)}}{2\gamma(\alpha + \gamma)} .
$$
where $\alpha = \rho(\theta, s) + s$, $\gamma = \cos\theta$. Then,
$$
\mathrm{CR}_{d > s}(s, \theta) = \begin{cases} \mathrm{CR}(s, d_-, \theta) & \text{if } d_- \text{ is real and } d_- > s, \\ \mathrm{CR}(s, s, \theta) & \text{otherwise}. \end{cases}
$$
\end{lemma}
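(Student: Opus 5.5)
We treat $f(d) = \mathrm{CR}(s,d,\theta) = (\alpha + 2d)/(\rho(\theta,d)+d)$ as a smooth function of $d$ on $[s,\infty)$, with $\alpha = \rho(\theta,s)+s$ fixed, and locate its supremum through critical points. For large $d$ we have $\rho(\theta,d) = d - \gamma + O(1/d)$, so $f(d)\to 1$ as $d\to\infty$. At the left end, $f(s) = (\alpha+2s)/\alpha \ge 1$. The supremum over $d>s$ is therefore either the boundary value $\mathrm{CR}(s,s,\theta)$, attained as $d\to s^+$, or an interior local maximum. So it suffices to find the interior critical points and determine which of them is a maximum.

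Differentiating, and using $\rho' = (d-\gamma)/\rho$ with $\rho=\rho(\theta,d)$, the condition $f'(d)=0$ becomes
\[
2(\rho + d) = (\alpha+2d)\Bigl(1 + \frac{d-\gamma}{\rho}\Bigr).
\]
Multiplying by $\rho$ and simplifying gives
\[
2\rho^2 - \alpha\rho = (\alpha+2d)(d-\gamma).
\]
Substituting $\rho^2 = 1-2d\gamma+d^2$, this reduces to the linear relation
\[
\alpha\rho = 2 - 2d\gamma - \alpha d + \alpha\gamma.
\]
Squaring and substituting $\rho^2$ again yields a quadratic in $d$,
\[
\alpha^2(1-2d\gamma+d^2) = \bigl(2+\alpha\gamma - d(2\gamma+\alpha)\bigr)^2 .
\]
Its leading coefficient is $\alpha^2-(2\gamma+\alpha)^2 = -4\gamma(\alpha+\gamma)$. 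We solve it by the quadratic formula and simplify the discriminant to $4\alpha^2(1-\gamma^2)(1+\alpha\gamma)$, which should produce the two roots $d_\pm$, with $d_-$ matching the displayed formula. This algebra is routine but must be checked carefully.

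The main obstacle is deciding which root is genuine and which is a maximum. Squaring can introduce spurious roots, so we keep only roots satisfying the sign condition $2+\alpha\gamma - d(2\gamma+\alpha)\ge 0$, that is, the right-hand side of the linear relation must be nonnegative. The plan is to show that $d_+$ always violates this condition, or lies where $f$ is decreasing, so that $f$ has at most one interior critical point. One then argues, using the sign of $f'$ near $d\to\infty$ (where $f\to 1^+$ from above or below according to the sign of $\gamma$), that this critical point is a maximum.

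Two cases remain. If $d_-$ is real and $d_->s$, then $f$ increases on $(s,d_-)$ and decreases afterwards, so the supremum is $f(d_-)$. Otherwise $f$ has no critical point in $(s,\infty)$ and is monotone there. Since $f(s)\ge 1$ and $f\to 1$, the function is nonincreasing, and the supremum is $f(s)=\mathrm{CR}(s,s,\theta)$.

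The case $\gamma=0$ (so $\theta=\pi/2$), where the denominator in the formula for $d_-$ vanishes, has to be handled separately or by a limiting argument. When $\gamma<0$ or $1+\alpha\gamma<0$, the discriminant becomes negative and the "not real" branch of the statement applies.
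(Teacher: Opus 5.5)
Your outline follows the paper's route: the linear critical-point relation $\alpha\rho = 2+\alpha\gamma-d(\alpha+2\gamma)$, squaring to a quadratic with leading coefficient $-4\gamma(\alpha+\gamma)$, discarding the spurious root, and using $f\to1$ to identify the surviving critical point as the maximum. For $\gamma>0$ it matches what the paper does. The gap is the obtuse case $\gamma<0$.

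\textbf{Where it fails.} Your final sentence asserts that $\gamma<0$ makes the discriminant negative. In fact the discriminant is a positive multiple of $(1-\gamma^2)(1+\alpha\gamma)$, which is negative only when $\alpha>1/|\gamma|$. For instance, $s=0$ (so $\alpha=1$) and $\theta=2\pi/3$ give $1+\alpha\gamma=1/2>0$ and $d_-=\sqrt{3/2}-1/2\approx0.72>s$. Then $\mathrm{CR}(0,d_-,\theta)\approx1.10>1=\mathrm{CR}(0,0,\theta)$. This regime contains $A_0$ for every obtuse $\theta$, and it is exactly where the first branch of the lemma has to be proved; your argument sends it to the wrong branch.

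Relatedly, your claim that $f$ tends to $1$ from above or below according to the sign of $\gamma$ is false. From $\rho(\theta,d)=d-\gamma+O(1/d)$ you get $f(d)=1+(\alpha+\gamma)/(2d)+O(d^{-2})$. Here $\alpha+\gamma>0$ always, since $\alpha\ge\rho(\theta,0)=1$ by the triangle inequality. Approach from above is precisely what makes a unique critical point a maximum. If $f$ really approached $1$ from below when $\gamma<0$, a unique critical point would have to be a minimum, contradicting your next paragraph.

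\textbf{How to close it.} For $\gamma<0$ and $1+\alpha\gamma\ge0$ you need two facts.

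First, $d_+<0$. Its numerator $\alpha(1+\gamma^2)+2\gamma+\alpha\sqrt{(1-\gamma^2)(1+\alpha\gamma)}$ is positive because $\alpha(1+\gamma^2)\ge1+\gamma^2>-2\gamma$, while the denominator $2\gamma(\alpha+\gamma)$ is negative. Your plan of showing that $d_+$ violates the sign condition cannot work here. When $\alpha+2\gamma>0$, every $d<0$ satisfies $2+\alpha\gamma-d(\alpha+2\gamma)>0$, so $d_+$ must be excluded because it is negative, not via the sign condition.

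Second, when $d_->0$ it does satisfy the sign condition. This is automatic if $\alpha+2\gamma\le0$. Otherwise, set $d_0=(2+\alpha\gamma)/(\alpha+2\gamma)>0$. The squared quadratic equals $\alpha^2\rho(\theta,d_0)^2>0$ at $d_0$ and opens upward, so $d_0$ lies outside $[d_+,d_-]$, hence $d_0>d_-$.

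The same evaluation at $d_0$, with the parabola opening downward, places $d_0$ strictly between $d_-$ and $d_+$ when $\gamma>0$. That is the check the paper only asserts.
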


\begin{proof}
From Equation~\eqref{eq:cr_d_gt_s}, $\mathrm{CR}(s,d,\theta) = (\alpha + 2d)/(\rho(\theta,d) + d)$ where $\alpha = \rho(\theta,s) + s$.
By the quotient rule,
$$
\frac{\partial}{\partial d}\mathrm{CR}(s,d,\theta) = \frac{2(\rho(\theta,d) + d) - (\alpha + 2d)\!\left(\frac{d - \gamma}{\rho(\theta,d)} + 1\right)}{(\rho(\theta,d) + d)^2},
$$
where the derivative of the denominator uses $\frac{\partial}{\partial d}\rho(\theta,d) = \frac{d - \gamma}{\rho(\theta,d)}$, which follows from $\rho(\theta,d) = \sqrt{1 - 2d\gamma + d^2}$ with $\gamma = \cos\theta$.

Setting the numerator to zero and multiplying through by $\rho(\theta,d)$:
\begin{align}
2\rho(\theta,d)(\rho(\theta,d) + d) &= (\alpha + 2d)(\rho(\theta,d) + d - \gamma) \notag \\
2\rho^2 + 2d\rho &= (\alpha + 2d)\rho + (\alpha+2d)(d - \gamma) && \text{(expanding)} \notag \\
-\alpha\,\rho &= (\alpha + 2d)(d - \gamma) - 2\rho^2 && \text{(collecting $\rho$ terms)} \notag \\
 &= \alpha d + 2d\gamma - \alpha\gamma - 2 && \text{(using $\rho^2 = 1 - 2d\gamma + d^2$)} \notag \\
\alpha\,\rho(\theta, d) &= 2 + \alpha\gamma - d(\alpha + 2\gamma). \label{eq:cr_critical}
\end{align}
Since the left-hand side is positive, so must be the right; when $\alpha + 2\gamma > 0$ this imposes $d < (2 + \alpha\gamma)/(\alpha + 2\gamma)$.
Squaring~\eqref{eq:cr_critical} and substituting $\rho^2 = 1 - 2d\gamma + d^2$ yields a quadratic $Ad^2 + Bd + C = 0$ with
\begin{align*}
A &= \alpha^2 - (\alpha+2\gamma)^2 = -4\gamma(\alpha + \gamma), \\
B &= -2\alpha^2\gamma + 2(2+\alpha\gamma)(\alpha+2\gamma) = 4\bigl(\alpha(1 + \gamma^2) + 2\gamma\bigr), \\
C &= \alpha^2 - (2+\alpha\gamma)^2 = -(4 - \alpha^2(1 - \gamma^2)),
\end{align*}
and discriminant $\Delta = B^2 - 4AC = 16\alpha^2(1 - \gamma^2)(1 + \alpha\gamma)$.
When $\Delta < 0$, the roots are complex and $\mathrm{CR}(s,d,\theta)$ (as defined in Equation~\eqref{eq:cr_d_gt_s}) has no critical points in $d$; since $\lim_{d \to \infty}\mathrm{CR}(s,d,\theta) = 1$, the function is strictly decreasing on $(0,\infty)$. When $\theta \in (\pi/2, \pi)$ (i.e., $\gamma < 0$), $\Delta < 0$ occurs when $\alpha > 1/|\gamma|$.

When $\Delta \geq 0$, the quadratic formula gives two roots:
$$
d_{\pm} = \frac{-B \pm \sqrt{\Delta}}{2A} = \frac{\alpha(1+\gamma^2) + 2\gamma \pm \alpha\sqrt{(1-\gamma^2)(1+\alpha\gamma)}}{2\gamma(\alpha + \gamma)}.
$$

Squaring may introduce a spurious root that does not satisfy the original (unsquared) equation~\eqref{eq:cr_critical}. We verify that $d_-$ is the unique valid critical point by checking the positivity constraint $2 + \alpha\gamma - d(\alpha + 2\gamma) > 0$ from~\eqref{eq:cr_critical}.

\emph{Case $\gamma > 0$ (i.e., $\theta \in (0, \pi/2)$):}
Here $A = -4\gamma(\alpha + \gamma) < 0$, so the parabola $Ad^2 + Bd + C$ opens downward. Since $B > 0$ and $A < 0$, we have $d_- < d_+$, and both roots are positive.
The right-hand side of~\eqref{eq:cr_critical} is a decreasing linear function of $d$ (since $\alpha + 2\gamma > 0$) that is positive at $d = 0$. The bound from the positivity constraint is $d < (2 + \alpha\gamma)/(\alpha + 2\gamma)$. One can verify that $d_-$ satisfies this bound while $d_+$ does not, by checking that $d_+ > (2+\alpha\gamma)/(\alpha+2\gamma) > d_-$ (using the explicit formulas). Hence $d_+$ is spurious and $d_-$ is the unique valid critical point.

\emph{Case $\gamma < 0$ (i.e., $\theta \in (\pi/2, \pi)$) with $\Delta \geq 0$:}
Here $\alpha + \gamma > 0$ (since $\alpha = \rho(\theta,s) + s \geq 1$ and $|\gamma| < 1$), so the denominator $2\gamma(\alpha + \gamma) < 0$. The numerator of $d_+$ is $\alpha(1+\gamma^2) + 2\gamma + \alpha\sqrt{(1-\gamma^2)(1+\alpha\gamma)}$. Since $1 + \gamma^2 > 2|\gamma|$ for $|\gamma| < 1$, we have $\alpha(1+\gamma^2) + 2\gamma > \alpha \cdot 2|\gamma| + 2\gamma = 0$ (as $\gamma < 0$), so the numerator of $d_+$ is positive. With a negative denominator, $d_+ < 0$, so $d_+$ is not a valid root.
For $d_-$, the numerator is $\alpha(1+\gamma^2) + 2\gamma - \alpha\sqrt{(1-\gamma^2)(1+\alpha\gamma)}$, which can be negative (since $\gamma < 0$ reduces the first terms), giving $d_- > 0$ with the negative denominator. Hence $d_-$ is the unique positive root.

When $d_-$ is real and positive, it is the unique critical point of $\mathrm{CR}(s,d,\theta)$ for $d > 0$. Since $\mathrm{CR} \to 1$ as $d \to \infty$, this critical point must be a global maximum.
For the restriction to $d > s$: if $d_- > s$, the maximum over $(s,\infty)$ is attained at $d_-$. If $d_- \leq s$ (or $d_-$ does not exist as a positive real), the function is decreasing on $(s, \infty)$ and the supremum is $\lim_{d \to s^+}\mathrm{CR}(s,d,\theta) = \mathrm{CR}(s,s,\theta)$.

\end{proof}

\begin{lemma}\label{lem:d_minus_decreasing}
$d_-(s, \theta)$ is strictly decreasing in $s$ wherever it is positive.
\end{lemma}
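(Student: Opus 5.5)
The plan is to reduce the statement to monotonicity in the single parameter $\alpha = \rho(\theta,s)+s$, since the formula for $d_-$ in Lemma~\ref{lem:case_d_gt_s} depends on $s$ only through $\alpha$ (with $\gamma=\cos\theta$ fixed).

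\emph{Step 1: $\alpha$ is strictly increasing in $s$.} First I would check that
$$\frac{\partial\alpha}{\partial s} = \frac{s-\gamma}{\rho(\theta,s)}+1 .$$
Next, note that $\rho(\theta,s)^2=(s-\gamma)^2+\sin^2\theta$. For $\theta\in(0,\pi)$ this gives $\rho(\theta,s)>|s-\gamma|$, so $\partial\alpha/\partial s>0$. It therefore suffices to show that the valid critical point $d_-$ is strictly decreasing as a function of $\alpha$.

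\emph{Step 2: a sign-change argument instead of differentiating the explicit formula.} Rather than differentiating the closed form of $d_-$, I would work with the numerator of $\partial_d \mathrm{CR}$ from the proof of Lemma~\ref{lem:case_d_gt_s}, divided by the positive factor $(\rho+d)^2$ only implicitly. Set
$$G(d,\alpha)=2(\rho(\theta,d)+d)-(\alpha+2d)\,\frac{\rho(\theta,d)+d-\gamma}{\rho(\theta,d)} ,$$
so that $\operatorname{sign}\partial_d\mathrm{CR}(s,d,\theta)=\operatorname{sign}G(d,\alpha)$.
\begin{itemize}
\item \emph{Monotonicity in $\alpha$.} We have $\partial G/\partial\alpha = -(\rho+d-\gamma)/\rho$. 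By the same inequality $\rho(\theta,d)>|d-\gamma|$, this is strictly negative for every $d>0$. Hence $G(d,\cdot)$ is strictly decreasing in $\alpha$.
\item \emph{Sign pattern in $d$.} The proof of Lemma~\ref{lem:case_d_gt_s} shows that, when $d_-(\alpha)$ is real and positive, it is the unique critical point on $(0,\infty)$ and is a global maximum. Combined with $\mathrm{CR}\to 1$ as $d\to\infty$, this gives $G(d,\alpha)>0$ on $(0,d_-(\alpha))$ and $G(d,\alpha)<0$ on $(d_-(\alpha),\infty)$.
\end{itemize}

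\emph{Step 3: compare two values of $\alpha$.} Take $\alpha<\alpha'$ with both $d_-(\alpha)$ and $d_-(\alpha')$ positive.
\begin{itemize}
\item For every $d\ge d_-(\alpha)$, we have $G(d,\alpha')<G(d,\alpha)\le 0$. So $G(\cdot,\alpha')$ is strictly negative on $[d_-(\alpha),\infty)$.
\item Applying the sign pattern of Step~2 to $\alpha'$, its sign change occurs at $d_-(\alpha')$. Hence $d_-(\alpha')<d_-(\alpha)$.
\end{itemize}
Composing with Step~1 gives strict decrease in $s$ wherever $d_-$ is positive. This comparison also handles the case $\gamma<0$ without separate algebra, and it avoids needing $\partial_d G\neq0$.

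\emph{Main obstacle: the sign pattern of $G$ in Step~2.} This is the step I expect to need the most care. The claim "positive before $d_-$, negative after" needs the following:
\begin{itemize}
\item the spurious root $d_+$ is genuinely excluded, which Lemma~\ref{lem:case_d_gt_s} provides;
\item $G$ is continuous on $(0,\infty)$;
\item $G$ has no zero other than $d_-$.
\end{itemize}
The degenerate case $\Delta=0$ (a double root, possible only when $\gamma<0$ and $1+\alpha\gamma=0$) should be checked separately. There I expect $G\le0$ throughout, with the maximum sitting at a boundary. If that is awkward, I would simply exclude it as the endpoint of the region where $d_->0$ and argue by continuity.
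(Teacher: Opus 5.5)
Your proof is correct, and Step~2 takes a genuinely different route from the paper's. The paper shares your Step~1. It then differentiates the closed form of $d_-$ in $\alpha$, obtaining $\partial d_-/\partial\alpha = -\sin\theta\,(\alpha^2+3\alpha\gamma+2+2\sin\theta\sqrt{1+\alpha\gamma})/(4(\alpha+\gamma)^2\sqrt{1+\alpha\gamma})$, and checks each factor's sign using $1+\alpha\gamma>0$ and $\alpha+\gamma>0$. You instead use a single-crossing comparison. $G(d,\cdot)$ is pointwise strictly decreasing in $\alpha$, and $G(\cdot,\alpha)$ changes sign once, from $+$ to $-$, at $d_-(\alpha)$. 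So increasing $\alpha$ moves the crossing to the left.

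The paper's computation gives an explicit derivative. It does, however, depend on differentiating a formula that is singular at $\gamma=0$, and on having selected the correct root of the squared quadratic. Your argument never uses the closed form or the differentiability of $d_-$, treats all signs of $\gamma$ uniformly, and gives strictness directly. The cost is that it leans on the qualitative part of Lemma~\ref{lem:case_d_gt_s} (unique critical point, global maximum), and the paper argues that part only loosely. Given that claim, your derivation of the sign pattern is valid. If $G$ were negative on $(0,d_-)$ or positive on $(d_-,\infty)$, then $d_-$ could not be a global maximum.

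You can also make Step~2 independent of that lemma. A direct computation gives $\rho(\theta,d)\,G(d,\alpha)=\phi(d):=2+\alpha\gamma-d(\alpha+2\gamma)-\alpha\rho(\theta,d)$. This $\phi$ is concave, being affine minus a positive multiple of the convex function $\rho(\theta,\cdot)$. It tends to $-\infty$, since $\rho\ge d-\gamma$ and $\alpha+\gamma>0$.
If $\phi(0)\le 0$, i.e.\ $\alpha(1-\gamma)\ge 2$, then $\phi'(0)=-\alpha(1-\gamma)-2\gamma\le-2(1+\gamma)<0$, so $\phi<0$ on $(0,\infty)$. Otherwise $\phi$ has exactly one zero on $(0,\infty)$, positive before it and negative after, which is exactly your sign pattern.

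Your concern about $\Delta=0$ is moot. That case forces $\gamma<0$, and the double root is $(\alpha(1+\gamma^2)+2\gamma)/(2\gamma(\alpha+\gamma))<0$. So it never occurs in the region where $d_->0$.
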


\begin{proof}
Recall that we let $\alpha = \rho(\theta,s) + s$ and $\gamma = \cos\theta$. By the chain rule,
$$
\frac{\partial d_-}{\partial s} = \frac{\partial d_-}{\partial \alpha} \cdot \frac{\partial \alpha}{\partial s}.
$$
For the second factor, $\frac{\partial \alpha}{\partial s} = 1 + \frac{s - \gamma}{\rho(\theta,s)} = \frac{\rho(\theta,s) + s - \gamma}{\rho(\theta,s)}$.
Since $\sin\theta \neq 0$,
$$\rho(\theta,s) = \sqrt{(\gamma - s)^2 + \sin^2\theta} > |\gamma - s|,$$
so $\rho(\theta,s) + s - \gamma > 0$ and $\frac{\partial \alpha}{\partial s} > 0$.
For the first factor, differentiating the expression for $d_-$ with respect to $\alpha$ yields
$$
\frac{\partial d_-}{\partial \alpha} = -\frac{\sin\theta\bigl(\alpha^2 + 3\alpha\gamma + 2 + 2\sin\theta\sqrt{1 + \alpha\gamma}\bigr)}{4(\alpha + \gamma)^2\sqrt{1 + \alpha\gamma}}.
$$
Since $d_-$ is positive, $1 + \alpha\gamma > 0$ by Lemma~\ref{lem:case_d_gt_s}. Also $\alpha + \gamma > 0$: at $s = 0$, $\alpha + \gamma = 1 + \gamma > 0$ since $\gamma \in (-1,1)$, and $\alpha$ is increasing in $s$ as shown above. Therefore $\alpha^2 + 3\alpha\gamma + 2 = \alpha(\alpha + \gamma) + 2(1 + \alpha\gamma) > 0$, every factor is positive, and $\frac{\partial d_-}{\partial \alpha} < 0$.
Thus, $\frac{\partial d_-}{\partial s} < 0$.

\end{proof}

\begin{theoremrep}\label{thm:as_competitive_ratio}
The competitive ratio of Algorithm $A_s$ is:
$$
\mathrm{CR}(A_s, \theta) = \max \{ \mathrm{CR}_{d \leq s}(s, \theta),~ \mathrm{CR}_{d > s}(s, \theta) \} .
$$
\end{theoremrep}

\begin{proof}
Follows directly from Lemmas~\ref{lem:case_d_leq_s} and~\ref{lem:case_d_gt_s}.

\end{proof}

\subsection{Optimal Choice of $s$}\label{sec:optimal_s}

Our goal is to choose $s \geq 0$ to minimize $\mathrm{CR}(A_s, \theta)$. We establish the structure of $\mathrm{CR}(A_s,\theta)$ as a function of~$s$ through a series of lemmas (see Figure~\ref{fig:optimal_s}).

\begin{figure}[t]
\centering
\includegraphics[width=0.75\textwidth]{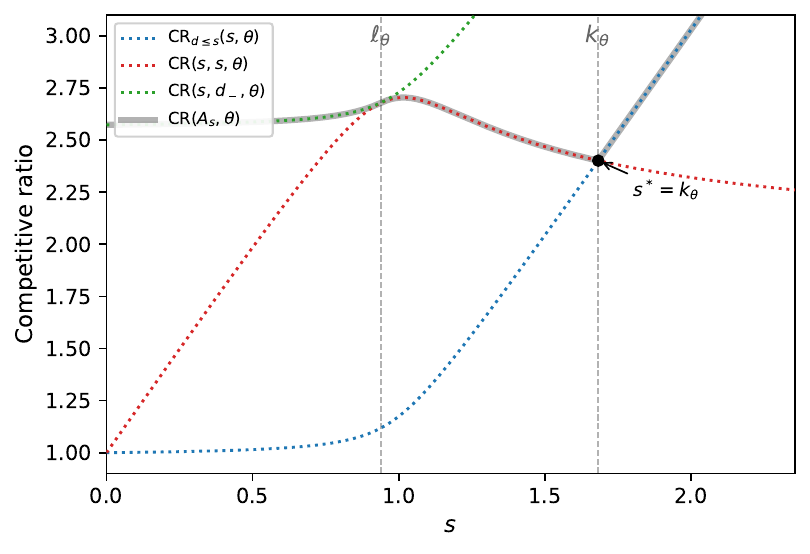}
\caption{Competitive ratio curves as functions of $s$ for $\theta = 10^\circ$. The thick gray curve shows $\mathrm{CR}(A_s,\theta)$: it follows $\mathrm{CR}(s,d_-,\theta)$ for $s \leq \ell_\theta$, then $\mathrm{CR}(s,s,\theta)$ for $\ell_\theta \leq s \leq k_\theta$, then $\mathrm{CR}_{d \leq s}(s,\theta)$ for $s \geq k_\theta$. The minimum is attained at $s^* = k_\theta$.}
\label{fig:optimal_s}
\end{figure}

\begin{lemmarep}\label{lem:key_ineq}
$\mathrm{CR}(s, d_-(s,\theta), \theta) \geq \mathrm{CR}_{d \leq s}(s,\theta).$
\end{lemmarep}

\begin{proof}
By Lemma~\ref{lem:case_d_gt_s}, $d_-(s,\theta)$ is the global maximizer of $\mathrm{CR}(s,d,\theta)$ over $d > 0$ when it exists as a positive real. Since $\lim_{d \to 0^+} \mathrm{CR}(s,d,\theta) = \rho(\theta,s)+s = \mathrm{CR}_{d \leq s}(s,\theta)$, the maximum value at $d_-$ is at least this limit.

\end{proof}

\begin{lemmarep}\label{lem:cr_dminus_increasing}
$\mathrm{CR}(s,d_-(s,\theta),\theta)$ is increasing in~$s$.
\end{lemmarep}

\begin{proof}
For every fixed $d > 0$, $\mathrm{CR}(s,d,\theta) = (\rho(\theta,s)+s+2d)/(\rho(\theta,d)+d)$ is increasing in~$s$, since the numerator increases with $\alpha = \rho(\theta,s)+s$ while the denominator is independent of~$s$. Taking the supremum over $d > 0$ preserves this monotonicity.

\end{proof}

\begin{lemmarep}\label{lem:crossing}
For $\theta \in (0,\pi/2)$, the equation $\mathrm{CR}_{d \leq s}(s,\theta) = \mathrm{CR}(s,s,\theta)$ has exactly two solutions for $s \geq 0$: $s = 0$ and $s = k_\theta$, where
\begin{equation}\label{eq:k_theta}
k_\theta := \frac{5 + 2\cos\theta + \cos 2\theta + \sqrt{2\cos^2\theta\,(11 + 4\cos\theta + \cos 2\theta)}}{8}.
\end{equation}
Moreover,
\begin{align*}
\mathrm{CR}(s,s,\theta) &> \mathrm{CR}_{d \leq s}(s,\theta) \;\text{ for } s \in (0, k_\theta), \\
\mathrm{CR}(s,s,\theta) &< \mathrm{CR}_{d \leq s}(s,\theta) \;\text{ for } s > k_\theta.
\end{align*}
\end{lemmarep}

\begin{proof}
By Lemma~\ref{lem:case_d_leq_s}, $\mathrm{CR}_{d \leq s}(s,\theta) = \rho(\theta,s) + s$, and from~\eqref{eq:cr_d_gt_s}, $\mathrm{CR}(s,s,\theta) = (\rho(\theta,s) + 3s)/(\rho(\theta,s) + s)$. Define
$$
f(s) \;=\; (\rho(\theta,s)+s)^2 - (\rho(\theta,s)+3s),
$$
so that $f(s) = \mathrm{CR}_{d \leq s}(s,\theta)\bigl[\mathrm{CR}_{d \leq s}(s,\theta) - \mathrm{CR}(s,s,\theta)\bigr]$. Since $\mathrm{CR}_{d \leq s} > 0$, $f(s) = 0$ iff $\mathrm{CR}_{d \leq s}(s,\theta) = \mathrm{CR}(s,s,\theta)$.

Writing $\rho = \rho(\theta,s)$ for brevity and $\gamma = \cos\theta$, we reduce $f(s) = 0$ to a polynomial as follows:
\begin{align}
(\rho + s)^2 &= \rho + 3s \notag \\
\rho(2s - 1) &= 3s - s^2 - \rho^2 & &\text{(rearranging)} \notag \\
 &= -1 + 3s + 2s\gamma - 2s^2 & &\text{(substituting $\rho^2$)} \label{eq:crossing_unsquared} \\
(2s-1)^2\rho^2 &= (-1 + 3s + 2s\gamma - 2s^2)^2 & &\text{(squaring both sides)} \notag \\
(2s-1)^2(1 - 2s\gamma + s^2) &= (-1 + 3s + 2s\gamma - 2s^2)^2. \label{eq:squared_crossing}
\end{align}
Expanding both sides of~\eqref{eq:squared_crossing} and subtracting, all terms of degree~4 cancel, yielding the cubic
\begin{equation}\label{eq:cubic_crossing}
2s(1 + \gamma) - s^2(8 + 4\gamma + 4\gamma^2) + 8s^3 = 0.
\end{equation}

Equation~\eqref{eq:cubic_crossing} factors as $s \cdot Q(s) = 0$ where
\begin{equation}\label{eq:quadratic_crossing}
Q(s) = 8s^2 - 2(5 + 2\gamma + \cos 2\theta)\,s + 2(1 + \gamma) = 0.
\end{equation}
The root $s = 0$ is the trivial solution. The quadratic~\eqref{eq:quadratic_crossing} has discriminant
$$
4(5 + 2\gamma + \cos 2\theta)^2 - 64(1+\gamma) = 8\gamma^2(11 + 4\gamma + \cos 2\theta),
$$
yielding roots
$$
s_\pm = \frac{5 + 2\gamma + \cos 2\theta \pm \sqrt{2\gamma^2\,(11 + 4\gamma + \cos 2\theta)}}{8}.
$$

Both $s_+$ and $s_-$ satisfy the squared equation~\eqref{eq:squared_crossing}, but not necessarily the original equation $f(s) = 0$, since squaring may introduce spurious roots.
Recall that $f(s) = 0$ requires $\rho(2s-1) = -1 + 3s + 2s\gamma - 2s^2$. At $s = s_+$, direct substitution confirms this identity holds. At $s = s_-$, the two sides of the unsquared equation have opposite signs, so $f(s_-) \neq 0$ and $s_-$ is extraneous.
Thus the only solutions of $f(s) = 0$ for $s \geq 0$ are $s = 0$ and $s = k_\theta := s_+$.

Observe that $f(s) < 0$ if and only if $\mathrm{CR}(s,s,\theta) > \mathrm{CR}_{d \leq s}(s,\theta)$. Since $f$ is continuous and its only zeros for $s \geq 0$ are $s = 0$ and $s = k_\theta$, the sign of~$f$ is constant on each of the intervals $(0, k_\theta)$ and $(k_\theta, \infty)$. We determine the sign on $(0, k_\theta)$ by computing $f'(0)$: since $\rho(\theta,0) = 1$ and $\rho'(\theta,0) = -\gamma$, we get $f'(0) = 2(1)(-\gamma + 1) - (-\gamma + 3) = -1 - \gamma < 0$, so $f < 0$ on $(0,k_\theta)$. For $(k_\theta, \infty)$, $f(s) \to +\infty$ as $s \to \infty$ (since the $(\rho+s)^2$ term dominates), giving $f > 0$.

\end{proof}

We denote by $k_\theta$ the \emph{optimal checkpoint distance} for angle $\theta$, defined by
\begin{equation}\label{eq:k_theta_def}
k_\theta \;=\; \frac{5 + 2\cos\theta + \cos 2\theta + \sqrt{2\cos^2\!\theta\,(11 + 4\cos\theta + \cos 2\theta)}}{8}.
\end{equation}
By Lemma~\ref{lem:crossing}, $k_\theta$ is the unique positive value of $s$ at which $\mathrm{CR}_{d \leq s}(s,\theta) = \mathrm{CR}(s,s,\theta)$. The next theorem shows that the minimum of $\mathrm{CR}(A_s,\theta)$ over $s \geq 0$ is attained at either $s = 0$ or $s = k_\theta$.

\begin{theoremrep}\label{thm:optimal_s}
$$
\min_{s \geq 0}\, \mathrm{CR}(A_s,\theta) = \min\bigl\{\mathrm{CR}(A_0,\theta),\; \mathrm{CR}(A_{k_\theta},\theta)\bigr\}.
$$
\end{theoremrep}

\begin{proof}
Since $d_-(0,\theta) > 0$ and $d_-(s,\theta)$ is strictly decreasing in~$s$ (Lemma~\ref{lem:d_minus_decreasing}), there exists a unique $\ell_\theta > 0$ with $d_-(\ell_\theta,\theta) = \ell_\theta$. By Lemma~\ref{lem:crossing}, $\mathrm{CR}(\ell_\theta,\ell_\theta,\theta) > \mathrm{CR}_{d \leq \ell_\theta}(\ell_\theta,\theta)$, so $\ell_\theta \in (0, k_\theta)$. We analyze $\mathrm{CR}(A_s,\theta) = \max\{\mathrm{CR}_{d \leq s},\, \mathrm{CR}_{d > s}\}$ on the three intervals $[0, \ell_\theta]$, $[\ell_\theta, k_\theta]$, and $[k_\theta, \infty)$ depicted in Figure~\ref{fig:optimal_s}.

\emph{For $s \in [0, \ell_\theta]$}: $d_-(s,\theta) > s$, so $\mathrm{CR}_{d > s}(s,\theta) = \mathrm{CR}(s,d_-,\theta) \geq \mathrm{CR}_{d \leq s}(s,\theta)$ by Lemma~\ref{lem:key_ineq}. Thus $\mathrm{CR}(A_s,\theta) = \mathrm{CR}(s,d_-,\theta)$, which is increasing by Lemma~\ref{lem:cr_dminus_increasing}. The minimum in this range is $\mathrm{CR}(A_0,\theta)$.

\emph{For $s \in [\ell_\theta, k_\theta]$}: $d_-(s,\theta) \leq s$, so $\mathrm{CR}_{d > s}(s,\theta) = \mathrm{CR}(s,s,\theta) > \mathrm{CR}_{d \leq s}(s,\theta)$ by Lemma~\ref{lem:crossing}. Hence $\mathrm{CR}(A_s,\theta) = \mathrm{CR}(s,s,\theta)$. Now $\mathrm{CR}(s,s,\theta) = 1 + 2s/(\rho(\theta,s)+s)$ has derivative $2(1-s\cos\theta)/(\rho\,(\rho+s)^2)$, which changes sign at most once, so $\mathrm{CR}(s,s,\theta)$ is unimodal. Its minimum on $[\ell_\theta, k_\theta]$ is therefore at an endpoint: either $\mathrm{CR}(\ell_\theta,\ell_\theta,\theta)$, which exceeds $\mathrm{CR}(A_0,\theta)$ by Lemma~\ref{lem:cr_dminus_increasing} since $\ell_\theta > 0$; or $\mathrm{CR}(k_\theta,k_\theta,\theta) = \mathrm{CR}(A_{k_\theta},\theta)$.

\emph{For $s \geq k_\theta$}: $\mathrm{CR}_{d \leq s}(s,\theta) \geq \mathrm{CR}(s,s,\theta) \geq \mathrm{CR}_{d > s}(s,\theta)$ by Lemma~\ref{lem:crossing}, so $\mathrm{CR}(A_s,\theta) = \mathrm{CR}_{d \leq s}(s,\theta) = \rho(\theta,s)+s$, which is increasing. The minimum in this range is $\mathrm{CR}(A_{k_\theta},\theta)$.

In all three ranges, $\mathrm{CR}(A_s,\theta) \geq \min\{\mathrm{CR}(A_0,\theta),\, \mathrm{CR}(A_{k_\theta},\theta)\}$.

\end{proof}

The transition between the two regimes is analyzed in Section~\ref{sec:critical_angle}. Specifically, $A_{k_\theta}$ is optimal for $\theta \leq \theta^*$ and $A_0$ is optimal for $\theta \geq \theta^*$, where $\theta^*$ is the critical angle determined in Theorem~\ref{thm:critical_angle}.

\subsection{The Critical Angle}\label{sec:critical_angle}

By Theorem~\ref{thm:optimal_s}, the optimal algorithm is whichever of $A_0$ and $A_{k_\theta}$ yields the smaller competitive ratio. We now determine when each is preferable.

\begin{theoremrep}\label{thm:critical_angle}
The unique $\theta^* \in (0, \pi/2)$ satisfying $\mathrm{CR}(A_0, \theta^*) = \mathrm{CR}(A_{k_{\theta^*}}, \theta^*)$ is
\begin{align*}
\theta^* &= \arccos\!\left(\frac{1}{12}\left(-8 + \sqrt[3]{1864 - 312\sqrt{33}} + 2\sqrt[3]{233 + 39\sqrt{33}}\right)\right) \\
&\approx 0.2716 \text{ rad} \approx 15.6^\circ.
\end{align*}
For $\theta \in [0, \theta^*]$, $A_{k_\theta}$ is optimal; for $\theta \in [\theta^*, \pi)$, $A_0$ is optimal.
\end{theoremrep}

\begin{proof}
When $s = 0$, $\mathrm{CR}(A_0, \theta) = \sup_{d > 0} (1+2d)/(\rho(\theta,d)+d)$. Setting $\alpha = 1$ in~\eqref{eq:cr_critical} gives the critical point equation $\rho(\theta,d) = 2 + \gamma - d(1 + 2\gamma)$.
Squaring and substituting $\rho^2 = 1 - 2d\gamma + d^2$, then dividing by $-(1+\gamma) > 0$:
\begin{align}
4\gamma\, d^2 - 4(1+\gamma)\,d + (3+\gamma) &= 0, \notag \\
d_{\mathrm{opt}} &= \frac{(1+\gamma) - \sqrt{1-\gamma}}{2\gamma}, \label{eq:d_opt_a0}
\end{align}
where the discriminant simplifies to $16(1+\gamma)^2 - 16\gamma(3+\gamma) = 16(1-\gamma) > 0$ for $\theta \in (0,\pi/2)$, and the minus branch is the valid root (the plus branch violates the positivity constraint from~\eqref{eq:cr_critical} at $s=0$).
Substituting $d_{\mathrm{opt}}$ back:
\begin{equation}\label{eq:cr_a0}
\mathrm{CR}(A_0, \theta) = \frac{3 + 2\sqrt{1-\gamma}}{(1 + \sqrt{1-\gamma})^2}, \quad \gamma = \cos\theta.
\end{equation}
This is strictly decreasing in $\theta$, since $\rho(\theta, d)$ is increasing in $\theta$ for each fixed $d > 0$, making $(1+2d)/(\rho(\theta,d)+d)$ pointwise decreasing, and hence the supremum as well.

At $\theta = 0$ ($\gamma = 1$, $\sqrt{1-\gamma} = 0$): $\mathrm{CR}(A_0, 0) = 3$ from~\eqref{eq:cr_a0}, and $k_0 = (2+\sqrt{2})/2$ gives $\mathrm{CR}(A_{k_0}, 0) = 2k_0 - 1 = 1 + \sqrt{2} < 3$, so $A_{k_\theta}$ is optimal. At $\theta = \pi/2$ ($\gamma = 0$, $\sqrt{1-\gamma} = 1$): $\mathrm{CR}(A_0, \pi/2) = 5/4$ from~\eqref{eq:cr_a0}, while $k_{\pi/2} = 1/2$ gives $\mathrm{CR}(A_{k_{\pi/2}}, \pi/2) = (1+\sqrt{5})/2 > 5/4$, so $A_0$ is optimal.
Since $\mathrm{CR}(A_0,\theta)$ and $\mathrm{CR}(A_{k_\theta},\theta)$ are continuous in $\theta$ and their ordering reverses between $\theta = 0$ and $\theta = \pi/2$, they must agree at some $\theta^* \in (0, \pi/2)$.

To obtain the closed form, substitute~\eqref{eq:cr_a0} and~\eqref{eq:k_theta_def} into $\mathrm{CR}(A_0, \theta) = \rho(\theta, k_\theta) + k_\theta$. This equation contains three nested square roots ($\sqrt{1-\gamma}$ from $\mathrm{CR}(A_0, \theta)$, the radical inside $k_\theta$, and $\rho(\theta, k_\theta)$) which we remove by repeated squaring and cross-multiplication. The result is a polynomial equation in $\gamma = \cos\theta$ of degree~$11$, which computer algebra factors as
\begin{equation}\label{eq:elim_poly}
(1+\gamma)\,(-1 + 3\gamma + \gamma^2 + \gamma^3)\,(4\gamma^3 + 8\gamma^2 - 11)\,(11 + 3\gamma - 15\gamma^2 + \gamma^3 + 4\gamma^4) = 0.
\end{equation}

Each squaring step can introduce spurious roots, so each factor must be checked against the original (unsquared) equation. The linear factor $(1+\gamma)$ has root $\gamma = -1$, corresponding to $\theta = \pi$ (the agent sits on the negative $x$-axis): direct evaluation gives $\mathrm{CR}(A_0, \pi) = 1$ but $\mathrm{CR}(A_{k_\pi}, \pi) = 3$, so this root is an artifact of squaring rather than a true solution. The degree-$4$ factor is strictly positive on $[0,1]$ and contributes no value of $\gamma = \cos\theta$ in our range. Discarding these two leaves a degree-$6$ equation that splits further into two cubics. The first, $-1 + 3\gamma + \gamma^2 + \gamma^3$, has a single real root $\gamma \approx 0.296$ in $(0,1)$; but substituting it into~\eqref{eq:cr_a0} and~\eqref{eq:k_theta_def} gives unequal competitive ratios ($1.38$ and $1.84$), so this root is also spurious. The only factor that yields a genuine solution is the second cubic,
\begin{equation}\label{eq:critical_cubic}
4\gamma^3 + 8\gamma^2 - 11 = 0,
\end{equation}
which has a unique real root
\begin{equation}\label{eq:gamma_star}
\gamma^* \;=\; \frac{1}{12}\!\left(-8 + \sqrt[3]{1864 - 312\sqrt{33}} + 2\sqrt[3]{233 + 39\sqrt{33}}\right)
\end{equation}
in $(0,1)$, and substitution confirms $\mathrm{CR}(A_0, \theta^*) = \mathrm{CR}(A_{k_{\theta^*}}, \theta^*) \approx 2.383$. Hence $\theta^* = \arccos \gamma^* \approx 0.2716$ rad $\approx 15.6^\circ$. A \texttt{Solve} call in the accompanying Mathematica script performs the elimination and the substitution checks.\footnote{\url{https://github.com/jaredraycoleman/sar_scripts}}

\end{proof}

At the boundary cases the competitive ratio recovers known results: at $\theta = 0$ the problem reduces to search and rescue on a ray~\cite{coleman_2023_sar} with optimal checkpoint $k_0 = (2+\sqrt{2})/2$ and $\mathrm{CR}(A_{k_0}, 0) = 1 + \sqrt{2} \approx 2.414$; at $\theta = \pi/2$, $\mathrm{CR}(A_0, \pi/2) = 5/4$; and as $\theta \to \pi$, $\mathrm{CR}(A_0, \theta) \to 1$.

The closed-form derivation above locates the crossover $\theta^*$ within $(0,\pi/2)$, where both candidate ratios are nontrivial. It remains to confirm that $A_0$ stays optimal on $[\pi/2,\pi)$. Observe that $\mathrm{CR}(A_0,\theta)$ is decreasing in $\theta$ (shown above), so $\mathrm{CR}(A_0,\theta) \le \mathrm{CR}(A_0,\pi/2) = 5/4$ for $\theta \ge \pi/2$, whereas the competing checkpoint algorithm satisfies $\mathrm{CR}(A_{k_\theta},\theta) \ge \mathrm{CR}(A_{k_{\pi/2}},\pi/2) = (1+\sqrt5)/2 \approx 1.618$ (the checkpoint ratio increases from $\pi/2$ toward $\mathrm{CR}(A_{k_\theta},\theta)\to 3$ as $\theta\to\pi$). Hence $A_0$ is strictly optimal throughout $[\pi/2,\pi)$, and together with Theorem~\ref{thm:optimal_s} the characterization holds for all $\theta \in [0,\pi)$.

Theorem~\ref{thm:critical_angle}, combined with Lemma~\ref{lem:reduction}, immediately yields Theorem~\ref{thm:main_general} for an arbitrary starting point $P_{\theta,r} = (r\cos\theta, r\sin\theta)$: the competitive ratio is independent of $r$, and the optimal checkpoint in the original (unscaled) coordinates is at distance $r \cdot k_\theta$ from the origin when $\theta \leq \theta^*$, or at the origin when $\theta \geq \theta^*$.

\FloatBarrier

Figure~\ref{fig:heatmaps} visualizes the results across the plane. For each starting position $P_{\theta,r} = (x, y) = (r\cos\theta, r\sin\theta)$ we evaluate the optimal algorithm of Theorem~\ref{thm:critical_angle} and plot two quantities: panel~(a) shows the competitive ratio $\mathrm{CR}(A_{s^*}, \theta)$ and panel~(b) shows the optimal checkpoint distance $s^* = r \cdot k_{|\theta|}$ (which is $0$ whenever $|\theta| \geq \theta^*$). The dashed white lines mark the critical-angle cone $|\theta| = \theta^*$.

\begin{figure}[t]
\centering
\includegraphics[width=\linewidth]{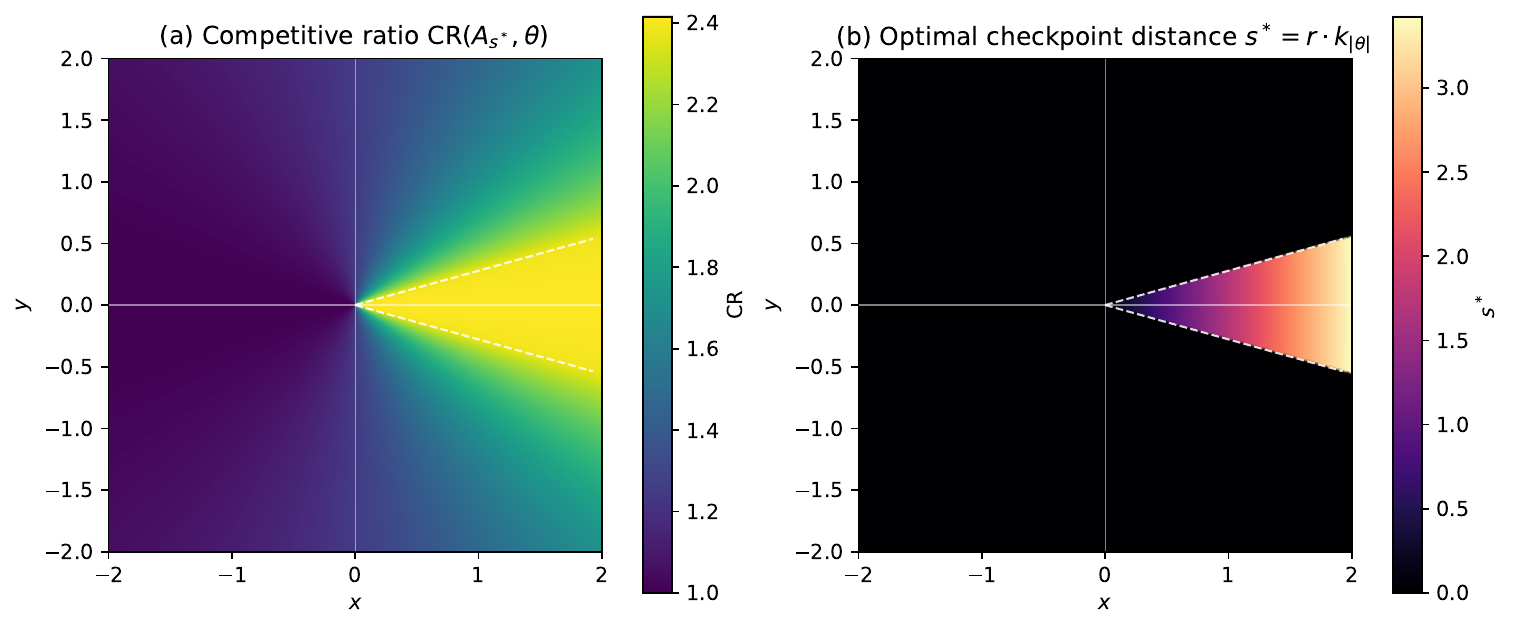}
\caption{Optimal algorithm on the plane, with the object located on the positive $x$-axis and the destination at the origin. (a)~Competitive ratio $\mathrm{CR}(A_{s^*}, \theta)$ as a function of the starting position $(x, y)$; by Lemma~\ref{lem:reduction} this depends only on $\theta = \arctan(y/x)$. (b)~Optimal checkpoint distance $s^* = r \cdot k_{|\theta|}$; outside the cone $|\theta| < \theta^*$ (dashed) the optimum is $s^* = 0$.}
\label{fig:heatmaps}
\end{figure}

Panel~(a) is radially symmetric, confirming scale-invariance (Lemma~\ref{lem:reduction}): the competitive ratio depends only on the angle $\theta$, never on the distance $r$. Within the critical cone $|\theta| < \theta^*$ the ratio rises toward $1 + \sqrt{2}$ as $|\theta| \to 0$, which recovers the one-dimensional search-and-rescue bound~\cite{coleman_2023_sar}; outside the cone it decays monotonically to $1$ as $|\theta| \to \pi$ (the agent lies behind the origin and must therefore go through it to find the target). Panel~(b) makes the phase transition at $\theta^*$ sharply visible: the optimal strategy uses a nonzero checkpoint only inside the forward cone. Together, the two panels are a direct numerical validation of Theorem~\ref{thm:critical_angle}: the discontinuity in $s^*$ at $|\theta| = \theta^*$ in panel~(b) coincides exactly with the angle at which the two candidate algorithms $A_0$ and $A_{k_\theta}$ attain equal competitive ratio in panel~(a).

\FloatBarrier
\section{Conclusion}

We presented results for a planar variant of the search and rescue problem in which an agent starting at an arbitrary position in the plane must locate an object on the positive $x$-axis and deliver it to the origin. We showed that optimal algorithms have a simple canonical form~$A_s$, derived the competitive ratio in closed form, and identified a critical angle $\theta^* \approx 15.6^\circ$ that governs the optimal strategy: for $\theta < \theta^*$ the agent benefits from an initial checkpoint on the $x$-axis, while for $\theta \geq \theta^*$ it is best to proceed directly to the origin.

Natural extensions include multiple agents, unknown starting positions, and other search domains (for example, the bounded segment). Another direction is to consider agents with limited fuel or turning costs, where the geometric trade-offs studied here would interact with additional resource constraints.

\begin{credits}
\subsubsection{\ackname}
Research of Evangelos Kranakis was supported in part by an NSERC (Natural Sciences and Engineering Research Council) of Canada grant.
\end{credits}

\bibliographystyle{splncs04}
\bibliography{refs}

\end{document}